\title[Enumeration of closures and environments]{On the enumeration of closures and environments with an application to random generation}
\author{Maciej Bendkowski}
\address{Jagiellonian University\\
        Faculty of Mathematics and Computer Science\\
        Theoretical Computer Science Department\\
        ul. Prof.~{\L}ojasiewicza 6, 30--348 Krak\'ow, Poland}
\thanks{Maciej Bendkowski was partially supported within the Polish
  National Science Center grant 2016/21/N/ST6/01032.}
        \email{maciej.bendkowski@tcs.uj.edu.pl}
\author{Pierre Lescanne}
\address{University of Lyon\\
        \'Ecole normale sup\'erieure de Lyon\\
        LIP (UMR 5668 CNRS ENS Lyon UCBL)\\
        46 all\'ee d'Italie, 69364 Lyon, France}
        \email{pierre.lescanne@ens-lyon.fr}
\keywords{lambda-calculus,
combinatorics,
functional programming,
mathematical analysis,
complexity}
\newcommand{\domsing}[1]{`r_{#1}}
\newcommand{\C}{\mathcal{C}}                            
\newcommand{\E}{\mathcal{E}}                            
\newcommand{\gL}{\mathcal{L}}
\newcommand{\Clos}{\mathcal{C}\mathit{los}}
\newcommand{\Env}{\mathcal{E}\mathit{nv}}
\newcommand{\zero}{\underline{0}}
\newcommand{\one}{\underline{1}}
\newcommand{\udl}[1]{\underline{#1}}
\newcommand{\nf}{\textsf{nf}}
\definecolor{vertfonce}{rgb}{0,.5,0}
\definecolor{mauve}{rgb}{1,0,1}
\definecolor{rougefonce}{cmyk}{.3,1,.3,0}
\definecolor{cyanp}{cmyk}{.5,.3,0,0}
\definecolor{yellow}{cmyk}{0,0,.7,0}
\definecolor{beige}{cmyk}{0,.2,.7,0}
\definecolor{brun}{cmyk}{0,.5,.7,0}
\definecolor{brunfonce}{cmyk}{.3,.75,.75,.15}
\begin{document}
\begin{abstract}
Environments and closures are two of the main ingredients of evaluation in
    lambda-calculus.  A closure is a pair consisting of a lambda-term and an
    environment, whereas an environment is a list of lambda-terms assigned to
    free variables. In this paper we investigate some dynamic aspects of
    evaluation in lambda-calculus considering the quantitative, combinatorial
    properties of environments and closures. Focusing on two classes of
    environments and closures, namely the so-called plain and closed ones, we
    consider the problem of their asymptotic counting and effective random
    generation. We provide an asymptotic approximation of the number of both
    plain environments and closures of size $n$. Using the associated generating
    functions, we construct effective samplers for both classes of combinatorial
    structures. Finally, we discuss the related problem of asymptotic counting
    and random generation of closed environments and closures.
\end{abstract}
\maketitle

\section{Introduction}
\label{sec:introduction}

Though, traditionally, computational complexity is investigated in the context
of Turing machines since their initial development, evaluation complexity in
various term rewriting systems, such as $`l$\=/calculus or combinatory logic,
attracts increasing attention only quite recently. For instance, let us mention
the worst-case analysis of evaluation, based on the invariance of unitary
cost
models~\cite{DBLP:journals/corr/abs-1208-0515,DBLP:conf/rta/AvanziniM10,DBLP:journals/corr/AccattoliL16}
or transformation techniques proving termination of term rewriting
systems~\cite{DBLP:conf/icfp/AvanziniLM15}.

Much like in classic computational complexity, the corresponding average-case
analysis of evaluation in term rewriting systems follows a different, more
combinatorial and quantitative approach, compared to its worst-case variant.
In~\cite{DBLP:conf/rta/ChoppyKS87,DBLP:journals/tcs/ChoppyKS89} Choppy, Kaplan
and Soria propose an average-case complexity analysis of normalisation in a
general class of term rewriting systems using generating functions, in
particular techniques from analytic combinatorics~\cite{flajolet09}. Following a
somewhat similar path, Bendkwoski, Grygiel and Zaionc investigated later the
asymptotic properties of normal-order reduction in combinatory logic, in
particular the normalisation cost of large random
combinators~\cite{bengryzai2017,BENDKOWSKI_2017}. Alas, normalisation in
$`l$\=/calculus has not yet been studied in such a combinatorial context.
Nonetheless, static, quantitative properties of $`l$\=/terms, form an active
stream of recent research. Let us mention, non-exhaustively, investigations into
the asymptotic properties of large random
$`l$\=/terms~\cite{dgkrtz,BendkowskiGLZ16} or their effective counting and
random generation ensuring a uniform distribution among terms with equal
size~\cite{DBLP:conf/analco/BodiniGG11,gryles2015,GittenbergerGolebiewskiG16,BodiniGitGol17}.

In the current paper, we take a step towards the average-case analysis of
reduction complexity in $`l$\=/calculus. Specifically, we offer a quantitative
analysis of environments and closures --- two types of structures frequently
present at the core of abstract machines modelling $`l$\=/term evaluation, such
as for instance the Krivine or U- machine~\cite{Curien:1994,Lescanne1994},
presented in Section~\ref{sec:machine}.
In Section~\ref{sec:closures-and-environments} we discuss the combinatorial
representation of environments and closures, in particular the associated
de~Bruijn notation. In Section~\ref{sec:analytic-tools} we list the analytic
combinatorics tools required for our analysis and we show in
Section~\ref{sec:random} how they can be used for random generation.
In Section~\ref{sec:counting-plain} and Section~\ref{sec:counting-closed} we conduct our
quantitative investigation into so-called plain and closed environments and
closures, respectively, subsequently concluding the paper
in Section~\ref{sec:conclusions}.

\section{A combinatoric approach to higher order rewriting systems}
\label{sec:motiv}

As said in the introduction, viewing the $`l$-calculus from the
perspective of counting is new, especially in the scientific community of
structures for computation and deduction and requires motivation to be
detailed.

First, clearly a new perspective on $`l$-calculus enlightens the semantics
and opens new directions, especially by adding a touch of efficiency and a
discussion on how the size of structures with binders (like  $`l$-terms)
can be measured.  However, despite  advanced mathematical techniques are used, the goal is
more practical and connected to operational semantics and
implementation. Counting allows assigning a precise measure on how a
specific algorithm performs.  In
~\cite{knuth00:math_Anal_Algo}\footnote{This paper is part of the
  book ``Selected Papers on Analysis of Algorithms''~\cite{knuth00:_selec_paper_analy_algor} dedicated to Professor
  N.~G. de~Bruijn.}  Knuth calls analysis of Type~A an \emph{analysis of a
  particular algorithm} and shows how important it is in computer
science. He adds (p.~3): ``Complexity analysis provides an interesting way
to sharpen our tools for the more routine problems we face from day to
day.''

Furthermore, a notion of probabilistic distribution as used in the
average-case analysis of algorithms, after Sedgewick and
Flajolet~\cite{sedgewick2014introduction}, is deduced.  In particular a
notion of uniform distribution is inferred in order to evaluate the
\emph{average case efficiency} of algorithms w.r.t. this distribution.  In
this paper, the algorithms the authors have in mind are the several
reduction machines for the $`l$-calculus, especially the Krivine machine
and the U-machine, for which analyses of Type~A and more specifically
average case analyses are expected to be built.  Another application is
\emph{random generation} of terms and several kinds of logical models for
computation as used for instance in
QuickCheck~\cite{Claessen-2000}.  A fully and mathematically justified
random generator can only be built using the kind of tools developed in
this paper.

But average case analysis based on uniform distribution is not the only
one. The so-called \emph{smoothed analysis of
  algorithms}~\cite{DBLP:journals/jacm/SpielmanT04} is another family of
tools which is based on measures of size. Here the distribution is no more
uniform and this method has promising applications, hopefully in structures for
computation.

\section{Environments and closures}
\label{sec:closures-and-environments}
In this section we outline the de~Bruijn notation and related concepts deriving
from $`l$\=/calculus variants with explicit substitutions used in the
subsequent sections.

\subsection{De~Bruijn notation}
Though the classic variable notation for $`l$\=/terms is elegant and concise, it
poses considerable implementation issues, especially in the context of
substitution resolution and potential name clashes. In order to accommodate
these problems, de~Bruijn proposed an alternative name-free notation for
$`l$\=/terms~\cite{deBruijn1972}.  In this notation, each variable $x$ is
replaced by an appropriate non-negative integer $\underline{n}$ (so-called
\emph{index}) intended to encode the distance between $x$ and its binding
abstraction. Specifically, if $x$ is bound to the $(n+1)$st abstraction on its
unique path to the term root in the associated $`l$\=/tree, then $x$ is replaced
by the index $\underline{n}$. In this manner, each closed $`l$\=/term in the
classic variable notation is representable in the de~Bruijn notation.

\begin{exa}\label{exa:ThreeTrees}
Consider the $`l$\=/term $T = (`l x y z u. x (`l y x .y))~(`l z . (`l u
.  u) z)$.~\autoref{fig:threeTerms} depicts three different representations of
$T$ as tree-like structures. The first one uses explicit variables, the second
one uses back pointers to represent the bound variables, whereas the third one
uses De Bruijn indices.

\begin{figure}[ht!]
  \centering
  \begin{displaymath}
  \resizebox{0.8\textwidth}{!}{
    \xymatrix @C=7pt@R=7pt{&&@\ar@{-}[dl]\ar@{-}[dr]\\ 
      &{`l x}\ar@{-}[d]&&`l z\ar@{-}[d]\\ 
      &`l y \ar@{-}[d]&&@\ar@{-}[dl]\ar@{-}[dr]\\ 
      &`l z \ar@{-}[d]&`l u\ar@{-}[d]&& z\\  
      &`l u\ar@{-}[d]& u\\ 
      &@\ar@{-}[dl]\ar@{-}[dr]\\ 
      x && `l y\ar@{-}[d]\\ 
      &&`l x\ar@{-}[d]\\ 
      && y
    }
    \qquad \qquad \qquad
    \xymatrix @C=7pt@R=7pt{&&@\ar@{-}[dl]\ar@{-}[dr]\\ 
      &{`l}\ar@{-}[d]&&`l\ar@{-}[d]\\ 
      &`l\ar@{-}[d]&&@\ar@{-}[dl]\ar@{-}[dr]\\ 
      &`l\ar@{-}[d]&`l\ar@{-}[d]&&\ar@[back]@(dr,r)[uul]\\  
      &`l\ar@{-}[d]&\ar@[black,dashed]@(d,r)[u]\\ 
      &@\ar@{-}[dl]\ar@{-}[dr]\\ 
      \ar@[black]@(dl,l)[uuuuur] && `l\ar@{-}[d]\\ 
      &&`l\ar@{-}[d]\\ 
      &&\ar@[black]@(d,r)[uu]
    }
    \qquad \qquad \qquad
    \xymatrix @C=7pt@R=7pt{&&@\ar@{-}[dl]\ar@{-}[dr]\\ 
      &{`l}\ar@{-}[d]&&`l\ar@{-}[d]\\ 
      &`l\ar@{-}[d]&&@\ar@{-}[dl]\ar@{-}[dr]\\ 
      &`l\ar@{-}[d]&`l\ar@{-}[d]&&\underline{0}\\  
      &`l\ar@{-}[d]&\underline{0}\\ 
      &@\ar@{-}[dl]\ar@{-}[dr]\\ 
      \underline{3} && `l\ar@{-}[d]\\ 
      &&`l\ar@{-}[d]\\ 
      &&\underline{1}
    }}
  \end{displaymath}
  \caption{Three representations of the $`l$\=/term $T = (`l x y z u. x (`l y
    x .y))~(`l z . (`l u . u) z)$.}
  \label{fig:threeTerms}
\end{figure}
In order to represent free occurrences of variables, one uses indices of values
exceeding the number of abstractions crossed on respective paths to the term
root. For instance, $`lx. y z$ can be represented as $`l \underline{1}
\underline{2}$ since $\underline{1}$ and $\underline{2}$ correspond to two
different variable occurrences.
\end{exa}

Recall that in the classic variable notation a~$`l$\=/term $M$ is said to be
\emph{closed} if each of its variables is bound.  In the de~Bruijn notation, it
means that for each index occurrence $\underline{n}$ in $M$ one finds at least
$n+1$ abstractions on the unique path from $\underline{n}$ to the term root of
$M$.  If a $`l$\=/term is not closed, it is said to be \emph{open}.  If
heading $M$ with $m$ abstractions turns it into a closed $`l$\=/term, then
$M$ is said to be \emph{$m$-open}. In particular, closed $`l$\=/terms are $0$-open.

\begin{exa}
Note that $`l `l `l `l (\underline{3}(`l `l \underline{1}))~(`l (`l
    \underline{0})\underline{0})$, actually the $T$ of
    Example~\ref{exa:ThreeTrees} in De Bruijn notation, is closed. The $`l$-term
    $\underline{3}(`l `l \underline{1})$ is $4$-open, however it is not
    $3$-open.  Indeed, $`l `l `l (\underline{3}(`l `l \underline{1}))$ is
    $1$-open instead of being closed.  Similarly, $`l (\underline{3}(`l `l
    \underline{1}))$ is $3$-open, however it is not $2$-open.
\end{exa}

\begin{exa}
  Consider, on Figure~\ref{fig:SK}, the term \textsf{S\,K} and its two direct contractions
  \begin{displaymath}
\newcommand{\ul}[1]{\underline{#1}}
    (`l`l`l\,\ul{2}\,\ul{0}\,(\ul{1}\,\ul{0}))\,(`l`l\ul{1}) "->" %
    `l`l((`l`l\ul{1})\,\ul{0}\,(\ul{1}\,\ul{0})) "->"%
    `l`l((`l\ul{1})\,(\ul{1}\,\ul{0}),
  \end{displaymath}
or, in notation with explicit names
\begin{displaymath}
  (`l x .`l y. `l z . x z (y z))\,(`l x. `l y. x) "->" %
`l y .`l z. (`l x . `l y . x) z (y z)  "->"%
`l y . `l z . (`l y . z) (y z).
\end{displaymath}
It shows how $`b$-contraction works in De Bruijn notation (cf.~the next
    subsection).  Moreover, it shows in
    $`l`l(`l\underline{1}\,(`l\underline{1}\,\underline{0})$ that the same
    variable namely $z$ may be associated with two De Bruijn indices, namely
    $\underline{1}$ and $\underline{0}$ and that the same De Bruijn index namely
    $\underline{1}$ may be associated with two variables namely $y$ and $z$.  In
    the de~Bruijn notation the value of an index associated with a variable
    depends of the context.
\end{exa}

\begin{figure}[th!]
  \centering
 \resizebox{0.9\textwidth}{!}{
  \xymatrix @C=7pt@R=7pt{
&&&&{@}\ar@{-}[dl]\ar@{-}[dr] \\
&&&`l\ar@{-}[d]&&`l\ar@{-}[d]&\\
&&&`l\ar@{-}[d]&&`l\ar@{-}[d]&\\
&&&`l\ar@{-}[d]&&\ar@[black]@(d,r)[uu]\\
&&&{@}\ar@{-}[dll]\ar@{-}[drr]\\
&{@}\ar@{-}[dl]\ar@{-}[dr]&&&&{@}\ar@{-}[dl]\ar@{-}[dr]& \\
\ar@(dl,l)[uuuuurrr] &&\ar@(dr,l)[uuur]&&\ar@(dl,r)[uuuul] &&\ar@(dr,r)[uuulll]  &
}
\quad\raisebox{-.1\textwidth}{$"->"$}
  \xymatrix @C=7pt@R=7pt{
&&&`l\ar@{-}[d]\\
&&&`l\ar@{-}[d]\\
&&&{@}\ar@{-}[dll]\ar@{-}[drr]\\
&@\ar@{-}[dl]\ar@{-}[dr]&&&&@\ar@{-}[dl]\ar@{-}[dr]\\
`l\ar@{-}[d]&&\ar@(dr,l)[uuur]&&\ar@(dl,r)[uuuul] &&\ar@(dr,r)[uuulll]\\
`l\ar@{-}[d]\\
\ar@(d,r)[uu]&
}
\quad\raisebox{-.1\textwidth}{$"->"$}\quad
  \xymatrix @C=7pt@R=7pt{
&&`l\ar@{-}[d]\\
&&`l\ar@{-}[d]\\
&&@\ar@{-}[dll]\ar@{-}[drr] \\
`l\ar@{-}[d]&&&&@\ar@{-}[dl]\ar@{-}[dr]\\
\ar@(dl,l)[uuurr] &&&\ar@(dl,r)[uuuul]&&\ar@(dr,r)[uuulll]
}
}
\caption{The term \textsf{S} \textsf{K} and two contractions.}\label{fig:SK}
\end{figure}

Certainly, the set $\gL_m$ of $m$-open terms is a subset of the set of
$(m+1)$-open terms. In other words, if $M$ is $m$-open, it is also $(m+1)$-open.
The set of all $`l$\=/terms is called the set of \emph{plain} terms.  It is the
union of the sets of $m$-open terms and is denoted as $\gL_\infty$. Hence,
\begin{equation}
  \gL_0 ``(= \gL_1 ``(= \cdots ``(= \gL_m ``(= \gL_{m+1} \cdots ``(=
    \bigcup_{i=0}^\infty \gL_i = \gL_\infty\, .
\end{equation}

Let us note that de~Bruijn's name-free representation of $`l$\=/terms exhibits
an important combinatorial benefit. Specifically, each $`l$\=/term in the
de~Bruijn notation represents an entire $`a$-equivalence class of $`l$\=/terms
in the classical variable notation. Indeed, two variable occurrences bound by
the same abstraction are assigned the same de~Bruijn index. In consequence,
counting $`l$\=/terms in the de~Bruijn notation we are, in fact, counting entire
$`a$-equivalence classes instead of their inhabitants.

\subsection{Closures and $`b$-reduction}
Recall that the main rewriting rule of $`l$-calculus is \emph{$`b$-reduction},
see, e.g.~\cite{Curien1996}:
\begin{equation}
  \begin{array}{l rcl}
    (`b)& (`l M)\ N &"->"& M \{\underline{0} "<-" N\}
  \end{array}
\end{equation}
where the operation $\{\underline{n} "<-" M\}$, i.e.~substitution of $`l$\=/terms
for de~Bruijn indices, is defined inductively as follows:
\begin{align}
\begin{split}
  (M~N)\{\underline{n} "<-" P\} &= M\{\underline{n} "<-"
  P\}~N\{\underline{n} "<-" P\}\\
(`l M)\{\underline{n}"<-" P\} &=`l (M\{\underline{(n+1)}"<-" P\})\\
\underline{m} \{\underline{n}"<-" P\}&=
    \begin{cases}
    \underline{m-1} & \textrm{if~}m > n\\
\tau^n_0(P) & \textrm{if~} m = n\\
\underline{m} & \textrm{if~} m < n\, .
    \end{cases}
\end{split}
\end{align}
The first rule distributes the substitution in an application, the second rule
pushes a substitution under an abstraction, and the third rule dictates how a
substitution acts when the term is an index.  Finally, $\tau^n_0(P)$ tells how
to update the indices of a term which is substituted for an index. The operation
$\tau^n_i(M)$ is defined by induction on $M$ as follows:
\begin{align}
\begin{split}
  \tau^n_i(M~N) &=  \tau^n_i(M)~\tau^n_i(N)\\
\tau^n_i(`l M) &=`l(\tau^n_{i+1}(M))\\
\tau^n_i(\underline{m}) &=
    \begin{cases}
        \underline{m+n-1} & \textrm{if~} m > i\\
        \underline{m} & \textrm{if~} m \le i\, .
    \end{cases}
\end{split}
\end{align}

A $`l$\=/term in the form of $(`l M)\ N$ is called a \emph{$`b$-redex} (or
simply a \emph{redex}). Lambda terms not containing $`b$-redexes as subterms,
are called ($`b$-)normal forms. The computational process of rewriting
(reducing) a $`l$\=/term to its $`b$\=/normal form by successive
elimination of $`b$\=/redexes is called \emph{normalisation}. There exists an
abundant literature on normalisation in $`l$\=/calculus; let us mention, not
exhaustively~\cite{landin64,Plotkin75,DBLP:conf/lfp/MaunyS86,Curien:1994,Mitchell2002}.

The central concepts present of formalisms dealing with
normalisation in $`l$\=/calculus are environments and closures. An
\emph{environment} is a list of not yet evaluated closed terms meant to be assigned to indices
$\underline{0}, \underline{1},\underline{2}, \ldots, \underline{m-1}$ of an
$m$-open $`l$\=/term.  As lists, environments have two basic operations
(two basic constructors), namely $\Box$ for the empty environment and ``$:$'' for
the \emph{cons} operator, i.e., for the operator that put an item in front of an environment.
Those not fully evaluated closed terms are represented by
\emph{closures}, where
a \emph{closure} is a couple
consisting of an $m$\=/open $`l$\=/term and an environment.  For instance, the
closure $"<"M,\Box">"$ consists of the $`l$\=/term $M$ evaluated in the context
of an empty environment, denoted as~$\Box$, and represents simply $M$. The
closure $"<"\zero\, \one, "<"`l`l\zero,\Box">":"<"`l\zero,\Box">":\Box">"$
represents the $`l$\=/term $(\zero\, \one)$ evaluated in the context of an
environment $"<"`l`l\zero,\Box">":"<"`l\zero,\Box">":\Box$. Here, intuitively,
the index $\zero$ receives the value $`l`l\zero$ whereas the index $\one$ is
assigned to $`l\zero$. Finally, $`l`l\zero$ is applied to $`l\zero$. And
so, reducing the closure $"<"\zero\, \one,
"<"`l`l\zero,\Box">":"<"`l\zero,\Box">":\Box">"$, for instance using a
Krivine abstract machine~\cite{Curien:1994} (see Section~\ref{sec:Kmachine}), we obtain $`l\zero$.

Let us notice that following the outlined description of environments and
closures, we can provide a formal combinatorial specification for both using the
following mutually recursive definitions:
\begin{align}\label{eq:closure:env:def}
\begin{split}
  \Clos &::= "<" `L, \Env ">" \\
  \Env &::= \Box \mid \Clos : \Env
\end{split}
\end{align}
In the above specification, $`L$ denotes the set of all plain
$`l$\=/terms.  Moreover, we introduce two binary operators ``$"<"\_,
\_">"$'', i.e.~the \emph{coupling} operator, and ``$:$'', i.e.~the
\emph{cons} operator, heading its left-hand side on the right-hand
list.  When applied to a $`l$\=/term and an environment, the coupling
operator constructs a new closure. In other words, a \emph{closure} is
a couple of a $`l$\=/term and an environment whereas an environment is
a list of closures, representing a list of assignments to free
occurrences of de~Bruijn indices.

Such a combinatorial specification for closures and environments plays an
important r\^ole as it allows us to investigate, using methods of analytic
combinatorics, the quantitative properties of both closures and environments.

\section{Closures and abstract machines}
\label{sec:machine}

Closures are one of the main ingredients of abstract machines performing
reduction in $\lambda$\nobreakdash-calculus. In the current section, we briefly
mention two such machines and discuss how closures and environments relate to
the evaluation dynamics of $`l$\nobreakdash-terms.

\subsection{The Krivine machine}
\label{sec:Kmachine}

The presentation of the Krivine machine we give here can be found in Curien's
book~\cite[p. 66]{Curien:1994}. The state of the machine is a non-empty
environment. Its transitions are:
\begin{displaymath}
\begin{array}{lcl@{\qquad}l}
  "<" M\,N, e">" : e' &"->"& "<" M, e ">" : "<"N, e">" : e' & (App)\\
  "<"`l M, e ">" :  "<"N, f">" : e' &"->"& "<"M, "<"N, f">" : e">" : e'&(Abs)\\
  "<"\zero,"<"M, f">":e">":e' &"->"& "<" M, f">":e'&(Zero)\\
  "<"\udl{n+1},"<"M, f">":e">":e' &"->"& "<" \udl{n},e">":e'& (Succ)
\end{array}
\end{displaymath}

Interestingly, it is possible to optimise the above transition rules by
merging the rules $(Zero)$ and $(Succ)$ into a single rule $(Fetch)$ given as
\begin{displaymath}
\begin{array}{lcl@{\qquad}l}
  "<"\udl{i},"<"M_0, f_0">":\ldots:"<"M_i, f_i">":\ldots">":e' &"->"& "<"
  M_i, f_i">":e'& (Fetch)
\end{array}
\end{displaymath}
In words, when interpreting the index $\udl{i}$ we evaluate the $i$th closure of
the environment associated with this index. Consequently, a sequence of $i+1$
transitions is replaced by a single one.  The above Krivine machine performs
head reductions and hence implements a \emph{call-by-name} evaluation strategy.
Strong normalisation can be implemented using, e.g.~the U\nobreakdash-machine.

\subsection{The U-machine}
\label{sec:Umachine}

The U-machine is an abstract machine derived from the calculus of explicit
substitution $`l`y$,
see~\cite{Lescanne1994,DBLP:conf/compass/Lescanne95,benaissa_briaud_lescanne_rouyer-degli_1996}.
First, let us recall that a term of the $`l`y$\nobreakdash-calculus can contain
\emph{explicit substitutions} in form of $M[s]$ where $M$ is a \emph{term} and
$s$ a \emph{substitution} as in the following grammar:
\begin{eqnarray*}
  M, N &::=& M\,N \mid `l M \mid \udl{n} \mid M[s]\\
  s &::=& M/ ~\mid~ \Uparrow(s) ~\mid~ \uparrow.
\end{eqnarray*}
New operators corresponding to components of explicit substitutions admit the
following, intuitive meaning.  The \emph{slash} operator $/$ turns a given term
into a substitution. Intuitively, it is meant to assign the given term $M$ to
the index $\zero$ as in $\zero[M/] \to M$. The \emph{shift} operator  $\uparrow$
is a constant whose role is to increment de~Bruijn indices, for instance
$\underline{n}[\uparrow] \to \underline{n+1}$.  Finally \emph{lift}, denoted as
$\Uparrow$, is meant to adjust the explicit substitution in the case when it is
pushed under an abstraction. For instance, $(`l N)[s] \to `l(N[\Uparrow(s)])$.

Formally, the way $`b$-reduction and explicit substitutions work together is
given by the following rules of the $`l`y$-calculus:
\begin{displaymath}
\begin{array}{lcl@{\qquad}l}
(`l M)\,N&"->"& M[N/] &(Beta)\\
M\,N[s]&"->"& M[s]\,N[s] & (App)\\
(`l M)[s] &"->"& `l(M[\Uparrow(s)])&(Abs)\\
\udl{0}[M/] &"->"& M &(FVar)\\
\udl{n+1}[M/]&"->"& \udl{n}& (RvAr)\\
\udl{0}[\Uparrow(s)]&"->"& \udl{0}&(FVarLift)\\
\udl{n+1}[\Uparrow(s)]&"->"& \udl{n}[s][\uparrow]&(RVarLift)\\
\udl{n}[\uparrow]&"->"& \udl{n+1} &(VarShift)
\end{array}
\end{displaymath}

In the U-machine \emph{environments} are modified so to fit with the features of
the $`l`y$\nobreakdash-calculus, especially with the shift and lift operators.
Environments are still lists of \emph{operations} to be performed on variables.
These operations, in turn, are pairs in form of $(a,i)$ where $i$ is the number
of lifts to be executed before basic actions are performed. Finally, basic
actions are of two forms; either they are a \emph{shift} $\uparrow$, or a closure $"<"M,e">"$.
In other words, closures and environments of the U-machine are changed into:
\begin{displaymath}
\begin{array}{lcl@{\qquad}l}
  e, f, g &::=& (a,i)^* &(\textrm{lists of operations})\\
  a &::=& \uparrow ~\mid~ "<" M, e">"& \textrm{(basic actions})\\
  i &::=& 0 ~\mid~ i+1 & (\textrm{number of lifts})
\end{array}
\end{displaymath}

A state of the U-machine is a list $"<"M,e">"^*$ of pairs where $M$ is a
\emph{term} and $e$ is a \emph{list of operations}.  Let $(++)$ denote list
concatenation and \textsf{LiftEnv} denote the map incrementing all the second
arguments of given list of pairs, i.e.~a coordinate-wise function $(a,i) \mapsto
(a,i+1)$. Then, the transitions of the U-machine are given as follows:
\begin{displaymath}
\begin{array}{lcl@{\qquad}l}
  "<" M\,N, f">" : e &"->"& "<" M, f ">" : "<"N, f">" : e & (APP)\\
  "<"`l M, f">" :  "<"N, g">" : e &"->"& "<"M, \mathsf{LiftEnv}(f) ++ ["<"N,g">"]">" : e&(LBA-BET)\\
  "<"\zero,(a,i+1):f">":e &"->"& "<" \zero, f">":e&(FVARLIFT)\\
  "<"\udl{n+1},(a,i+1):f">":e &"->"& "<" \udl{n},
  (a,i):(\uparrow,0):f">":e& (RVARLIFT)\\
"<" \udl{0}, ("<"M,f">",0):g">":e&"->"& "<"M, f ++ g">":e&(FVAR)\\
"<"\udl{n+1},  ("<"M,f">",0):g">":e&"->"& "<"\udl{n},g">":e&(RVAR)\\
"<"\udl{n}, (\uparrow, 0):g">":e&"->"& "<"\udl{n+1}, g">": e&(VARSHIFT)
\end{array}
\end{displaymath}
\newcommand{\aU}{\raisebox{-5pt}{$\stackrel{\xymatrix{\ar@{->|}[r]&}}{\scriptscriptstyle
      U}$}} %
\newcommand{\anf}{\mathop{\downarrow \!\!{\scriptstyle\nf}}} %

In the U-machine, two kind of states cannot be further reduced, i.e.~states of
the form $"<"`l N, f">": \Box$ (abstractions with empty stacks) and states of
the form $"<"\udl{n},\Box">":f$ (indices with nothing in their direct
environment). It is possible to further reduce those states using strong
normalisation.  For that, we introduce the following inference rules which correspond to
recursive calls of the U-machine. In there inference rules, $\!\!\aU$ is a
relation between list of pairs in form of $"<"M,e">"$ and corresponds to the
reduction to normal form. Moreover, $\anf$ is a deterministic relation between a
closure and a term.  When we want to designate the result $N$ of the relation
$\anf$ we write $\nf"<"M, e">"$ instead of $"<"M, e">" ~ \anf ~ N$.
\begin{displaymath}
  \prooftree
"<"M, e">":\Box  \aU "<"`l N, f">":\Box
    \justifies "<"M,e">" \quad \anf \quad `l (\nf"<"N, \textsf{LiftEnv}(f)">")
  \endprooftree
\end{displaymath}
\bigskip
\begin{displaymath}
  \prooftree
"<"M, e">":\Box  \aU "<"\udl{n}, \Box">":f
\justifies "<"M, e">"  \quad \anf \quad  \udl{n}~ (\mathsf{map}~ \nf~f)
  \endprooftree
\end{displaymath}

Actually, $\udl{n}~ (\mathsf{map}~ \nf~f)$ is an abuse of notation for the
successive applications of the list $\mathsf{map}~ \nf~f$ on $\udl{n}$.

\section{Analytic tools}
\label{sec:analytic-tools}
In the following section we briefly\footnote{In such a short presentation
  of a non-trivial theory, many terms, like ``branch'', ``Newton-Puiseux
  series'', ``locally convergent'' etc.  are not defined. They are defined
  in the references~\cite{flajolet09,Wilf2006,ghys2017singular}.} outline
the main techniques and notions from the theory of generating functions
and singularity analysis. We refer the curious reader
to~\cite{flajolet09,Wilf2006,ghys2017singular} for a thorough
introduction.

Let $\left(f_n\right)_n$ be a sequence of non-negative integers. Then, the
\emph{generating function} $F(z)$ associated with $\left(f_n\right)_n$ is the
formal power series  $F(z) = \sum_{n\geq 0} f_n z^n$. Following standard
notational conventions, we use $[z^n]F(z)$ to denote the coefficient standing by
$z^n$ in the power series expansion of $F(z)$. Given two sequences
$\left(a_n\right)_n$ and $\left(b_n\right)_n$ we write $a_n \sim b_n$ to denote
the fact that both sequences admit the same asymptotic growth order,
specifically $\displaystyle\lim_{n \to \infty} \dfrac{a_n}{b_n} = 1$. Finally,
we write $\varphi \doteq c$ when the
expression $\varphi$ is approximated by the number~$c$.

Suppose that $F(z)$, viewed as a function of a single complex variable $z$, is
defined in some region $\Omega$ of the complex plane centred at $z_0 \in
\Omega$. Then, if $F(z)$ admits a convergent power series expansion in form of
\begin{equation} F(z) = \sum_{n \geq 0} f_n {(z - z_0)}^n \end{equation} it is
    said to be \emph{analytic} at point $z_0$. Moreover, if $F(z)$ is analytic
    at each point $z \in \Omega$, then $F(z)$ is said to be \emph{analytic in the
    region $\Omega$}. Suppose that there exists a function $G(z)$ analytic in a
    region $\Omega^*$ such that $\Omega \cap \Omega^* \neq \emptyset$ and both
    $F(z)$ and $G(z)$ agree on $\Omega \cap \Omega^*$, i.e.~$F\rvert_{\Omega
    \cap \Omega^*} = G\rvert_{\Omega \cap \Omega^*}$, where $F\rvert_A$ is
  the restriction of the function $F$ on the region $A$. Then, $G(z)$ is said to be
    an \emph{analytic continuation} of $F(z)$ onto $\Omega^*$. If $F(z)$ defined
    in some region $\Omega \setminus \{z_0\}$ has no analytic continuation onto
    $\Omega$, then $z_0$ is said to be a \emph{singularity} of $F(z)$.
When a formal power series $F(z) = \sum_{n\geq 0} f_n z^n$ represents an
analytic function in some neighbourhood of the complex plane origin, it becomes
possible to link the location and type of singularities corresponding to $F(z)$,
in particular so-called \emph{dominating} singularities residing at the
respective circle of convergence, with the asymptotic growth rate of its
coefficients.  This process of \emph{singularity analysis} developed by Flajolet
and Odlyzko~\cite{FlajoletOdlyzko1990} provides a general and systematic
technique for establishing the quantitative aspects of a broad class of
combinatorial structures.

While investigating environments and closures, a particular example of algebraic
combinatorial structures, the respective generating functions turn out to be
algebraic themselves. The following prominent tools provide the essential
foundation underlying the process of \emph{algebraic singularity analysis} based
on \emph{Newton-Puiseux expansions}, i.e.~extensions of power series allowing
fractional exponents.

\begin{thmC}[Newton,
    Puiseux~{\cite[Theorem~VII.7]{flajolet09}}]\label{th:newton-puiseux}
    Let $F(z)$ be a branch of an algebraic equation $P(z, F(z)) = 0$. Then, in a
    circular neighbourhood of a singularity $\rho$ slit along a ray emanating
    from $\rho$, $F(z)$ admits a fractional Newton-Puiseux series expansion that
    is locally convergent and of the form \begin{equation} F(z) = \sum_{k \geq
    k_0} c_k {\left( z - \rho \right)}^{k/\kappa} \end{equation}
    where $k_0 \in \mathbb{Z}$ and $\kappa \geq 1$.
\end{thmC}
Let $F(z)$ be analytic at the origin. Note that $[z^n]F(z) =
\rho^{-n}[z^n]F(\rho z)$. In consequence, following a proper rescaling we can
focus on the type of singularities of $F(z)$ on the unit circle. The standard
function scale provides then the asymptotic expansion of $[z^n]F(z)$.

\begin{thmC}[Standard function
    scale~{\cite[Theorem~VI.1]{flajolet09}}]\label{th:standard-func-scale}
      Let $\alpha \in \mathbb{C} \setminus \mathbb{Z}_{\leq 0}$. Then, $F(z) =
      {(1 - z)}^{-\alpha}$ admits for large $n$ a complete asymptotic expansion
      in form of \begin{equation} [z^n]F(z) =
          \frac{n^{\alpha-1}}{\Gamma(\alpha)} \left( 1 +
          \frac{\alpha(\alpha-1)}{2n} +
      \frac{\alpha(\alpha-1)(\alpha-2)(3\alpha-1)}{24n^2} +
      O\left(\frac{1}{n^3}\right) \right) \end{equation} where $\Gamma \colon
      \mathbb{C} \setminus \mathbb{Z}_{\leq 0} \to \mathbb{C}$ is the Euler
      Gamma function defined as \begin{equation} \Gamma(z) = \int_{0}^{\infty}
      x^{z-1} e^{-x} dx \qquad \textrm{for~} \Re(z) > 0  \end{equation}
and by analytic continuation on all its domain.
\end{thmC}
Given an analytic generating function $F(z)$ implicitly defined as a branch of
an algebraic function satisfying $P(z, F(z)) = 0$, our task of establishing the
asymptotic expansion of the corresponding sequence ${\left([z^n]F(z)\right)}_n$
reduces to locating and studying the (dominating) singularities of $F(z)$.  For
generating functions analytic at the complex plane origin, this quest simplifies
even further due to the following classic result.

\begin{thmC}[Pringsheim~{\cite[Theorem~IV.6]{flajolet09}}]\label{th:pringsheim}
  If $F(z)$ is representable at the origin by a series expansion that has
    non-negative coefficients and radius of convergence $R$, then the point $z =
    R$ is a singularity of $F(z)$.
\end{thmC}
We can therefore focus on the real line while searching for respective
singularities. Since $\sqrt{z}$ cannot be unambiguously defined as an analytic
function at $z=0$ we primarily focus on roots of radicand expressions in the
closed-form formulae of investigated generating functions.

\subsection*{Counting $`l$-terms}
Let us outline the main quantitative results concerning $`l$\=/terms in the
de~Bruijn notation, see~\cite{Bendkowski2016,BendkowskiGLZ16,GittenbergerGolebiewskiG16}. In
this combinatorial model, indices are represented in a unary encoding
using the successor operator $\mathsf{S}$ and $0$. In the so-called
\emph{natural} size notion~\cite{BendkowskiGLZ16}, assumed throughout the
current paper, the size of $`l$\=/terms is defined recursively as follows:
\begin{displaymath}
  \begin{array}{l@{\hspace*{50pt}}l}
    \begin{array}{lcl}
      | 0 | &=& 1\\
      | \mathsf{S}~ n| &=& |\underline{n}| ~~ = ~~ |n| + 1
    \end{array}
    &
    \begin{array}{lcl}
      |M \, N| &=& |M| + |N| + 1\\
      |`l M | &=& |M| + 1\, .
    \end{array}
  \end{array}
\end{displaymath}
And so, for example, $|`l \underline{1} \underline{2}| = 7$.

\begin{rem}
We briefly remark that different size notions in the de~Bruijn representation,
    alternative to the assumed natural one, are considered in the literature.
    Among all of them, we choose to consider the above size notion in order to
    minimise the technical overhead of the overall presentation.  Analytic
    methods employed in the current paper cover a broad range of possible size
    measures.  We refer the curious reader
    to~\cite{gryles2015,BodiniGitGol17,GittenbergerGolebiewskiG16} for a
    detailed analysis of various size notions in the de~Bruijn representation.
\end{rem}

Let $l_n$ denote the number of plain $`l$\=/terms of size $n$. Consider the
generating function $L_\infty(z) = \sum_{n \geq 0} l_n z^n$.  Using symbolic
methods, see~\cite[Part A.  Symbolic Methods]{flajolet09} we note that
$L_\infty(z)$ satisfies
\begin{equation}\label{eq:L_infty:fun:eq}
 L_\infty(z) \ = \ z L_\infty(z) + z {L_\infty(z)}^2 + D(z)
\qquad\textrm{where}\qquad
  D(z) = \frac{z}{1-z} = \sum_{n=0}^{\infty} z^{n+1}.
\end{equation}
In words, a $`l$\=/term is either (a) an abstraction followed by another
$`l$\=/term, accounting for the first summand, (b) an application of two
$`l$\=/terms, accounting for the second summand, or finally, (c) a de~Bruijn
index which is, in turn, a sequence of successors applied to $0$.
Solving~\eqref{eq:L_infty:fun:eq} for $L_\infty(z)$ we find that the generating
function $L_\infty(z)$, taking into account that the coefficients $l_n$ are
positive for all $n$, admits the following closed-form solution:
\begin{equation}
    L_{\infty}(z) = \frac{1-z-\sqrt{{(1-z)}^2 - \frac{4z}{1-z}}}{2z}\, .
\end{equation}
The first values of the coefficients of $L_{\infty}$ are:
\begin{center}
  1, 3, 10, 40, 181, 884, 4539, 24142, 131821,
 734577, 4160626 23881695,138610418, ...
\end{center}
This sequence is \textbf{A258973} in the \emph{Online Encyclopedia of Integer
  Sequences}.  In such a form, $L_\infty(z)$ is amenable to the standard techniques of
singularity analysis.
In consequence we have the following general asymptotic
approximation of $l_n$.
\begin{thmC}[Bendkowski, Grygiel, Lescanne, Zaionc~\cite{BendkowskiGLZ16}]
    The sequence $\left([z^n]L_\infty(z)\right)_n$ corresponding to plain
    $`l$\=/terms of size $n$ admits the following asymptotic approximation:
            \begin{equation}
            [z^n]L_{\infty}(z) \sim C {\rho^{-n}_{L_\infty}} n^{-3/2}
\end{equation}
where
\begin{equation}\label{eq:L_infty:sing}
\domsing{L_\infty} = \frac{1}{3}
        \left(\sqrt[3]{26+6 \sqrt{33}} -\frac{4\ 2^{2/3}}{\sqrt[3]{13+3
        \sqrt{33}}} - 1\right) \doteq 0.29559 \quad \text{and} \quad C \doteq 0.60676.
\end{equation}
\end{thmC}

In the context of evaluation, the arguably most interesting subclass of
$`l$\=/terms are closed or, more generally, $m$-open $`l$\=/terms. Recall that
an $m$-open $`l$\=/term takes one of the following forms. Either it is (a) an
abstraction followed by an $(m+1)$-open $`l$\=/term, or (b) an application of
two $m$-open $`l$\=/terms, or finally, (c) one of the indices
$\underline{0},\underline{1},\ldots,\underline{m-1}$.  Such a specification for
$m$-open $`l$\=/terms yields the following functional equation defining the
associated generating function
$L_m(z)$:
\begin{equation}\label{eq:def:L_m}
  L_m(z) = z L_{m+1}(z) + z {L_m(z)}^2 + \frac{1-z^m}{1-z}\, .
\end{equation}
Since $L_m(z)$ depends on $L_{m+1}(z)$,
solving~\eqref{eq:def:L_m} for $L_m(z)$ one finds that
\begin{equation}\label{eq:def:L_m:closed}
  L_m(z) = \frac{1-\sqrt{1 - 4z^2\left(L_{m+1}(z)
        + \frac{1-z^m}{1-z}\right)}}{2z}\, .
\end{equation}
For instance, the first coefficients of $L_0(z)$ are
\begin{equation*}
  0,0,1,1,3,6,17,41,116,313,895,2550,7450,21881,65168,\ldots
\end{equation*}
the first coefficients of $L_1(z)$ are
\begin{equation*}
 0,1,1,3,5,15,34,98,258,743,2098,6142,17988,53614,160619,\ldots
\end{equation*}
and the first coefficients of $L_2(z)$ are
\begin{equation*}
  0,1,2,3,8,18,49,130,364,1032,2987,8758,26000,77937,235677,\ldots
\end{equation*}

The presentation of $L_m(z)$ given in (\ref{eq:def:L_m}) poses considerable difficulties as $L_m(z)$
depends on $L_{m+1}(z)$ depending itself on $L_{m+2}(z)$, etc.  If developed,
the formula~\eqref{eq:def:L_m:closed} for $L_m(z)$ consists of an infinite
number of nested radicals. In consequence, standard analytic combinatorics tools
do not provide the asymptotic expansion of $[z^n]L_m(z)$, in particular
$[z^n]L_0(z)$ associated with closed $`l$\=/terms.  In their recent breakthrough
paper, Bodini, Gittenberger and Go\l\c{e}biewski~\cite{BodiniGitGol17} propose a
clever approximation of the infinite system associated with $L_m(z)$ and give
the following asymptotic approximation for the number of $m$-open $`l$\=/terms.
\begin{thmC}[Bodini, Gittenberger and
    Go\l\c{e}biewski~\cite{BodiniGitGol17}]\label{th:L_m:asymptotic:expansion}
    The sequence $\left([z^n]L_m(z)\right)_n$ corresponding to $m$-open
    $`l$\=/terms of size $n$ admits the following asymptotic approximation:
            \begin{equation}
            [z^n]L_{m}(z) \sim C_m {\rho^{-n}_{L_\infty}} n^{-3/2}
\end{equation}
where ${\rho_{L_\infty}}$ is the dominant singularity corresponding to plain
    $`l$-terms, see~\eqref{eq:L_infty:sing}, and $C_m$ is a constant, depending
    solely on $m$.
\end{thmC}
Let us remark that for closed $`l$\=/terms, the constant $C_0$ lies in between
$0.07790995266$ and $0.0779099823$.  In what follows, we use the
above~\autoref{th:L_m:asymptotic:expansion} in our investigations regarding what
we call closed closures.

\newcommand{\FigureRecurrence}{
\begin{figure}[ht!]
  \scalebox{1.0}{\parbox{\linewidth}{%
  \begin{align*}
    & (125\,{n}^{3}-125\,n ) \;e_{n} \,+ \\
    &  (-475\,{n}^{3}-150\,{n}^{2}+325\,n ) \;e_{n+1}\,+ \\
    &  ( -1625\,{n}^{3}-13650\,{n}^{2}-29125\,n-17100 ) \;e_{n+2}\,+ \\
    &  ( 5925\,{n}^{3}+65550\,{n}^{2}+ 204825\,n+190800 ) \;e_{n+3}\,+ \\
    &  ( -10950\,{n}^{3}
    -149850\,{n}^{2}-609000\,n-744300 ) \;e_{n+4}\,+ \\
    &  ( 43599\,{n}^{3}+638460\,{n}^{2}+3028701\,n+4633680 ) \;e_{n+5}\,+ \\
    & ( -97781\,{n}^{3}-1680378\,{n}^{2}-9481237
    \,n-17550960 ) \;e_{n+6}\,+ \\
    &  ( 122749\,{n}^{3}+
    2388066\,{n}^{2}+15211685\,n+31648968 ) \;e_{n+7}\,+ \\
    &  ( -184402\,{n}^{3}-3954630\,{n}^{2}-27717140\,n-63149544
    ) \;e_{n+8}\,+ \\
    &  ( 280081\,{n}^{3}+6826380\,{n}^{2}+54868451\,n+145130568 ) \;e_{n+9}\,+ \\
    &  ( -205649 \,{n}^{3}-5654610\,{n}^{2}-51851989\,n-158722620 ) \;e_{n+10}\,+ \\
    &  ( 37439\,{n}^{3}+1339686\,{n}^{2}+16635271\,n+70682784
    ) \;e_{n+11}\,+ \\
    &  ( -68686\,{n}^{3}-3028038\,{n}^{2}-43616336\,n-205972920 ) \;e_{n+12}\,+ \\
    &  ( 222029 \,{n}^{3}+9258780\,{n}^{2}+128417911\,n+592399800 )
    \;e_{n+ 13}\,+ \\
    &  ( -241115\,{n}^{3}-10519830\,{n}^{2}-152823475\,n-
    739190880 ) \;e_{n+14}\,+ \\
    &  ( 134151\,{n}^{3}+
    6201222\,{n}^{2}+95476551\,n+489605640 ) \;e_{n+15}\,+ \\
    &  ( -42231\,{n}^{3}-2067834\,{n}^{2}-33729375\,n-183277332
    ) \;e_{n+16}\,+ \\
    &  ( 7470\,{n}^{3}+386418\,{n}^{2}+
    6659316\,n+38233296 ) \;e_{n+17}\,+ \\
    &  ( -678\,{n}^{3}-36972\,{n}^{2}-671670\,n-4065240 ) \;e_{n+18}\,+ \\
    &  ( 24\,{n}^{3}+1380\,{n}^{2}+26436\,n+168720 ) \;e_{n+ 19} = 0.
  \end{align*}
    }}

\medskip

    \begin{math}
    \resizebox{0.55\textwidth}{!}{$\displaystyle
    \begin{array}{l@{\quad}l} %
      \begin{array}{ll}
        e_{ 0} &= 1,\\
        e_{ 1} &= 1,\\
        e_{ 2} &= 4,\\
        e_{ 3} &=17,\\
        e_{ 4} &=77,\\
        e_{ 5} &=364,\\
        e_{ 6} &=1776,\\
        e_{ 7} &=8881,\\
        e_{ 8} &=45296,\\
        e_{ 9} &=234806,
      \end{array}
      & %
      \begin{array}{ll}
        e_{ 10} &=1233816,\\
        e_{ 11} &=6558106,\\
        e_{ 12} &= 35202448,\\
        e_{ 13} &=190568779,\\
        e_{ 14} &= 1039296373,\\
        e_{ 15} &=5704834700,\\
        e_{ 16} &= 31494550253,\\
        e_{ 17} &=174759749005,\\
        e_{ 18} &= 974155147162.
      \end{array}
    \end{array}
    $}
    \end{math}

\caption{Linear recurrence defining $e_{n}$ with
corresponding initial conditions.}\label{fig:rec}
\end{figure}}

\section{Counting plain closures and environments}
\label{sec:counting-plain}
In this section we start with counting \emph{plain environments and closures},
i.e.~members of $\mathcal{E}nv$ and $\mathcal{C}los$,
see~\eqref{eq:closure:env:def}.  We consider a simple model in which the size of
environments and closures is equal to the total number of abstractions,
applications and the sum of all the de~Bruijn index sizes. Formally, we set
\begin{displaymath}
\left|"<" M, e ">"\right| = \left|M\right| + \left|e\right|\qquad\qquad
\left|\mathfrak{c} : e\right| = \left|\mathfrak{c}\right| + \left|e\right|\qquad\qquad
\left|\Box\right| = 0\, .
\end{displaymath}

\begin{exa}
    The following two tables list the first few plain environments and closures.
    \begin{displaymath}
    \resizebox{0.4\textwidth}{!}{$
    \begin{array}{l | c | c}
      \textbf{size} & \textbf{environments}& \textbf{total}\\\hline\hline
      0 & \Box & 1 \\\hline
      1 & "<" \zero, \Box ">":\Box & 1 \\\hline
      & "<"\zero, \Box">" :"<"\zero, \Box">" :\Box &\\
      2 & "<" \zero, "<"\zero, \Box">":\Box ">" : \Box & 4 \\
      & "<"`l \zero, \Box">":\Box,\quad
      "<" \one, \Box">" :\Box&
    \end{array}$}
    \qquad\qquad
    \resizebox{0.3\textwidth}{!}{$
    \begin{array}{l | c | c}
      \textbf{size} & \textbf{closures}& \textbf{total}\\\hline\hline
      0 & & 0 \\\hline
      1 & "<" \zero, \Box ">" & 1 \\\hline
      & "<" \zero, "<"\zero, \Box">" ">"  & \\
      2 & "<" `l\zero, \Box">" \quad "<" \one, \Box">"
      & 3
    \end{array}$}
  \end{displaymath}
\end{exa}
By analogy with the notation $\gL_\infty$ for the set of plain $`l$\=/terms, we
write $\E_\infty$ and $\C_\infty$ to denote the class of plain environments and
closures, respectively.  Reformulating~\eqref{eq:closure:env:def} we can now
give a formal specification for both $\E_\infty$ and $\C_\infty$ as follows:
\begin{align}\label{eq:plain:closures:envs:system}
\begin{split}
  \E_\infty &= \C_\infty : \E_\infty~\mid~\Box \\
  \C_\infty &= "<" \gL_\infty, \E_\infty ">"\, .
\end{split}
\end{align}
In such a form, both classes $\E_\infty$ and $\C_\infty$ become amenable to the
process of singularity analysis. In consequence, we obtain the following
asymptotic approximation for the number of plain environments and closures.

\begin{thm}
The numbers $e_n$ and $c_n$ of plain environments and closures of size $n$,
    respectively, admit the following asymptotic approximations:
\begin{equation}\label{eq:plain:closures:evns:approx}
e_n \sim C_e \cdot \rho^{-n} n^{-3/2} \quad \text{and} \quad
c_n \sim C_c \cdot \rho^{-n} n^{-3/2}
\end{equation}
where
\begin{align}
    \begin{split}
        C_e &= \frac{\sqrt{\frac{5}{47} \left(109+35
        \sqrt{545}\right)}}{8\,\sqrt{\pi}} \doteq 0.699997,\\
        C_c &=  \frac{\sqrt{\frac{10 \left(48069 \sqrt{5}-10295
        \sqrt{109}\right)}{65 \sqrt{109}-301
        \sqrt{5}}}}{\sqrt{\pi}\,\left(77-3\sqrt{545}\right)} \doteq
        0.174999
    \end{split}
\end{align}
and
\begin{equation}\label{eq:plain:environments:rho}
\rho = \dfrac{1}{10} \left(25-\sqrt{545}\right) \doteq 0.165476 \quad
\text{giving} \quad
\rho^{-n} \doteq {6.04315}^n.
\end{equation}
\end{thm}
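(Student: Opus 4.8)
The plan is to apply the symbolic method to specification~\eqref{eq:plain:closures:envs:system}, translate it into a system of generating functions solvable in closed form in terms of the already-known series $L_\infty(z)$, and then carry out algebraic singularity analysis. Writing $E(z) = \sum_n e_n z^n$ and $C(z) = \sum_n c_n z^n$ for the ordinary generating functions of $\E_\infty$ and $\C_\infty$, I would read off~\eqref{eq:plain:closures:envs:system} directly: an environment is a (possibly empty) sequence of closures, and since the coupling and cons operators contribute nothing to the size, the sequence and product constructions give
\begin{equation*}
E(z) = \frac{1}{1 - C(z)} \qquad\text{and}\qquad C(z) = L_\infty(z)\,E(z).
\end{equation*}
Eliminating $E$ yields $C(z) - C(z)^2 = L_\infty(z)$; selecting the branch analytic at the origin (so that $C(0)=0$, matching $c_0 = 0$) gives
\begin{equation*}
C(z) = \frac{1 - \sqrt{1 - 4L_\infty(z)}}{2} \qquad\text{and}\qquad E(z) = \frac{2}{1 + \sqrt{1 - 4L_\infty(z)}}.
\end{equation*}
Both functions are algebraic and analytic at $0$, so the whole problem reduces to locating and resolving their dominant singularity.

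The key step, and the main obstacle, is to realise that the dominant singularity of $C$ and $E$ is \emph{not} the singularity $\rho_{L_\infty}$ inherited from $L_\infty$, but a new branch point created by the square root, located where $1 - 4L_\infty(z) = 0$, i.e.~where $L_\infty(z) = \tfrac14$. I would establish this as follows. By Pringsheim's theorem (\autoref{th:pringsheim}) the dominant singularity is real and positive, and on $[0,\rho_{L_\infty})$ the function $L_\infty$ is analytic, increasing, and continuous, with $L_\infty(0)=0$ and finite limiting value $\frac{1-\rho_{L_\infty}}{2\rho_{L_\infty}} \doteq 1.19$ at its own square-root singularity $\rho_{L_\infty}\doteq 0.29559$. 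Since $\tfrac14$ lies strictly between $0$ and this limit, the intermediate value theorem yields a unique $\rho\in(0,\rho_{L_\infty})$ with $L_\infty(\rho)=\tfrac14$; as $\rho<\rho_{L_\infty}$, the series $L_\infty$ is still analytic there, so $\rho$ is indeed the dominant singularity of $C$ and $E$. To pin $\rho$ down exactly I would substitute $L_\infty(\rho)=\tfrac14$ into the defining equation~\eqref{eq:L_infty:fun:eq}, $L_\infty = zL_\infty + zL_\infty^2 + \frac{z}{1-z}$, which collapses to the quadratic $5\rho^2 - 25\rho + 4 = 0$; its smaller root is $\rho = \tfrac{1}{10}\bigl(25-\sqrt{545}\bigr)$, exactly as in~\eqref{eq:plain:environments:rho}.

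Finally I would extract the asymptotics. Because $L_\infty$ is analytic at $\rho$, a Taylor expansion gives $1 - 4L_\infty(z) = 4\rho\,L_\infty'(\rho)\,(1 - z/\rho)\bigl(1 + O(z-\rho)\bigr)$, so near $\rho$ both $C$ and $E$ behave like $a - b\sqrt{1 - z/\rho}$ with $b$ built from $\rho$ and $L_\infty'(\rho)$; concretely the coefficient of $\sqrt{1-z/\rho}$ is $-\sqrt{\rho\,L_\infty'(\rho)}$ for $C$ and four times that for $E$. After verifying the routine hypotheses of the transfer theorem (aperiodicity of the classes, so that $\rho$ is the unique singularity on $|z|=\rho$, and analyticity in a suitable slit neighbourhood of $\rho$), the standard function scale (\autoref{th:standard-func-scale}) applied with $\alpha=-\tfrac12$, together with $\Gamma(-\tfrac12)=-2\sqrt{\pi}$, yields $c_n \sim C_c\,\rho^{-n} n^{-3/2}$ and $e_n \sim C_e\,\rho^{-n} n^{-3/2}$ with
\begin{equation*}
C_c = \frac{\sqrt{\rho\,L_\infty'(\rho)}}{2\sqrt{\pi}} \qquad\text{and}\qquad C_e = 4\,C_c,
\end{equation*}
where $L_\infty'(\rho)$ is obtained by implicitly differentiating~\eqref{eq:L_infty:fun:eq} and evaluating at $L_\infty(\rho)=\tfrac14$. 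The remaining work, namely substituting $\rho = \tfrac{1}{10}(25-\sqrt{545})$ and simplifying the resulting nested radicals into the stated closed forms for $C_e$ and $C_c$, is routine but heavy algebra that I would delegate to a computer algebra system.
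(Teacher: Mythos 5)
Your proposal is correct and follows essentially the same route as the paper's proof: the same system of functional equations from the specification, the same closed forms $C(z)=\tfrac12\bigl(1-\sqrt{1-4L_\infty(z)}\bigr)$ and $E(z)=C(z)/L_\infty(z)$, identification of the supercritical schema with the dominant singularity at the root $\rho$ of $1-4L_\infty(z)$ inside $(0,\rho_{L_\infty})$, and transfer via the standard function scale on the Newton--Puiseux expansions. Your explicit expressions $C_c=\sqrt{\rho\,L_\infty'(\rho)}\,/\,(2\sqrt{\pi})$ and $C_e=4C_c$ (note $0.699997\doteq 4\times 0.174999$) agree numerically with the paper's closed-form constants, and your intermediate-value argument for the existence of $\rho$ together with the quadratic $5\rho^2-25\rho+4=0$ is a slightly more detailed justification of the step the paper states summarily.
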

\begin{proof}
Consider generating functions $E_\infty(z)$ and $C_\infty(z)$ associated with
    respective counting sequences, i.e.~the sequence ${\left(e_n\right)}_n$ of
    plain environments of size $n$ and ${\left(c_n\right)}_n$ of plain closures
    of size $n$.  Based on the
    specification~\eqref{eq:plain:closures:envs:system} for $\E_\infty$ and
    $\C_\infty$ and the assumed size notion, we can write down the following
    system of functional equations satisfied by $E_\infty(z)$ and $C_\infty(z)$:
\begin{align}\label{eq:plain:closures:envs:gfun}
\begin{split}
  E_{\infty}(z) &= C_{\infty}(z) E_{\infty}(z) + 1\\
  C_{\infty}(z) &= L_{\infty}(z)\, E_{\infty}(z).
\end{split}
\end{align}
Next, we solve~\eqref{eq:plain:closures:envs:gfun} for $E_{\infty}(z)$ and
    $C_{\infty}(z)$. Though~\eqref{eq:plain:closures:envs:gfun} has two formal
    solutions, the following one is the single one yielding analytic generating
    functions with non-negative coefficients:
\begin{equation}\label{eq:plain:closures:envs:gfun:solution}
E_{\infty}(z) = \frac{1-\sqrt{1-4 L_\infty(z)}}{2 L_\infty(z)}
\quad \text{and} \quad C_{\infty}(z) = \frac{1}{2} \left(1-\sqrt{1-4 L_\infty(z)}\right)\, .
\end{equation}
Since $L_\infty(z) > 0$ for $z \in \left(0, \rho_{L_\infty}\right)$ there are
    two potential sources of singularities
    in~\eqref{eq:plain:closures:envs:gfun:solution}.  Specifically, the
    dominating singularity $\rho_{L_\infty}$ of $L_\infty(z)$,
    see~\eqref{eq:L_infty:sing}, or roots of the radicand expression $1-4
    L_\infty(z)$.  Therefore, we have to determine whether we fall into the
    so-called sub- or super-critical composition schema, see~\cite[Chapter VI.
    9]{flajolet09}.  Solving $1 - 4 L_{\infty}(z) = 0$ for $z$, we find that it
    admits a single solution $\rho$ equal to
\begin{equation}
\rho = \frac{1}{10} \left(25-\sqrt{545}\right) \doteq 0.165476\, .
\end{equation}
    Since $\rho < \rho_{L_\infty}$ the outer radicand carries the dominant
    singularity $\rho$ of both $E_{\infty}(z)$ and $C_{\infty}(z)$. We fall
    therefore directly into the super-critical composition schema and in
    consequence know that near $\rho$ both $E_{\infty}(z)$ and $C_{\infty}(z)$
    admit Newton-Puiseux expansions in form of
\begin{align}\label{eq:plain:closures:envs:gfun:puiseux}
\begin{split}
E_{\infty}(z) &= a_{E_{\infty}} + b_{E_{\infty}} \sqrt{1- \frac{z}{\rho}} +
    O\left(\bigg|1-\frac{z}{\rho}\bigg|\right)\\
\text{and}\\
C_{\infty}(z) &= a_{C_{\infty}} + b_{C_{\infty}} \sqrt{1- \frac{z}{\rho}} +
    O\left(\bigg|1-\frac{z}{\rho}\bigg|\right)
\end{split}
\end{align}
with $a_{E_{\infty}}, a_{C_{\infty}} > 0$ and $ b_{E_{\infty}},
b_{C_{\infty}} < 0$. At this point, we can apply the standard function
scale, see~\autoref{th:standard-func-scale}, to the presentation of
$E_{\infty}(z)$ and $C_{\infty}(z)$ in
$(\ref{eq:plain:closures:envs:gfun:puiseux})$ and conclude that
\begin{equation}
[z^n]E_{\infty}(z) \sim C_{E_{\infty}} \rho^{-n} n^{-3/2}
\quad \text{and} \quad [z^n]C_{\infty}(z) \sim C_{C_{\infty}} \rho^{-n} n^{-3/2}
\end{equation}
where $C_{E_{\infty}} = \dfrac{b_{E_{\infty}}}{\Gamma(-\frac{1}{2})}$ and
    $C_{C_{\infty}} = \dfrac{b_{C_{\infty}}}{\Gamma(-\frac{1}{2})}$,
    respectively, with $\Gamma(-\frac{1}{2})= 2\sqrt{\pi}$.  In fact,
    reformulating~\eqref{eq:plain:closures:envs:gfun:solution} so to fit the
    Newton-Puiseux expansion forms~\eqref{eq:plain:closures:envs:gfun:puiseux}
    we find that
\begin{equation}
a_{E_{\infty}} = 2, \quad
b_{E_{\infty}} = -\frac{1}{4} \sqrt{\frac{5}{47} \left(109+35 \sqrt{545}\right)}
\end{equation}
and
\begin{equation}
a_{C_{\infty}} = \frac{1}{2}, \quad
b_{C_{\infty}} = \frac{2 \sqrt{\frac{10 \left(48069 \sqrt{5}-10295
    \sqrt{109}\right)}{65 \sqrt{109}-301 \sqrt{5}}}}{3 \sqrt{545}-77}
\end{equation}
Numerical approximations of $C_{E_{\infty}} =
\dfrac{b_{E_{\infty}}}{\Gamma(-\frac{1}{2})}$ and $C_{C_{\infty}} =
\dfrac{b_{C_{\infty}}}{\Gamma(-\frac{1}{2})}$ yield the declared asymptotic
behaviour of ${\left(e_n\right)}_n$ and ${\left(c_n\right)}_n$,
see~\eqref{eq:plain:closures:evns:approx}.
\end{proof}

Let us notice that as both generating functions $E_{\infty}(z)$ and
$C_{\infty}(z)$ are algebraic, they are also \emph{holonomic} (D-finite),
i.e.~satisfy differential equations with polynomial (in terms of~$z$)
coefficients. Using the powerful \texttt{gfun} library for
\texttt{Maple}~\cite{SalvyZimmermann1994} one can automatically derive
appropriate holonomic equations for $E_{\infty}(z)$ and $C_{\infty}(z)$,
subsequently converting them into linear recurrences for sequences
${\left(e_n\right)}_n$ and ${\left(c_n\right)}_n$.

\begin{exa}\label{exa:linEn}
We restrict the presentation to the linear recurrence for the number of plain
    environments, omitting for brevity the, likely verbose, respective
    recurrence for plain closures.  Using \texttt{gfun} we find that $e_{n}$
    satisfies the recurrence of Figure~\ref{fig:rec}.  Despite its appearance,
    this recurrence is an efficient way of computing $e_{n}$.  Indeed, holonomic
    specifications for $C_{\infty}(z)$ and $E_{\infty}(z)$ allow  computing the
    coefficients $[z^n]C_{\infty}(z)$ and $[z^n]E_{\infty}(z)$ using a linear
    number of arithmetic operations, as opposed to a quadratic number of
    operations as following their direct combinatorial specification. Let us
    remark that the involved computations operate on large integers, which
    have a linear in $n$
    space representation. For instance, $e_{1000}$ has about $600$
    digits.  In consequence, single arithmetic operations on such numbers cannot
    be performed in constant time.
\FigureRecurrence
\end{exa}

\section{Random generation of closures and environments}
\label{sec:random}

Effective counting methods for various discrete structures are among the most
prominent and ubiquitous subjects in combinatorics. Although interesting in
their own right, such counting methods (and in particular related algorithms)
exhibit important benefits in the context of generating random instances of
corresponding combinatorial structures. Let us mention, for instance, the
successive use of random $`l$\nobreakdash-terms used to disprove the correctness
of eagerness optimisations of the salient Glasgow Haskell Compiler, see~\cite{palka2012}.

Given the fact that closures and environments are fundamental data structures
used in different abstract machines related to the execution of
$`l$\nobreakdash-terms, random closures and environments can be used to model
(in other words simulate) actual data encountered in the execution traces of
abstract machines such as the Krivine or U-machines.  In this context, random
generation of closures and environments provide effective means of testing the
correctness of respective abstract machine implementations as well as facilitate
their optimisation and eventual perfection.

With analytic generating functions $C_\infty(z)$ and $E_\infty(z)$ for plain
closures and environments, respectively, it becomes possible to design efficient
exact- or approximate-size samplers (i.e.~algorithms constructing random
structures) corresponding to both combinatorial classes. In particular, we can
use the general frameworks of Boltzmann samplers~\cite{Duchon2004} by Duchon et
al.~or the so-called recursive method~\cite{NijenhuisWilf1978,FLAJOLET19941} of
Nijenhuis and Wilf. Remarkably, in both frameworks the sampler design resembles
the recursive structure of the target combinatorial specification.  Moreover,
for a broad class of discrete structures such as, for instance, algebraic
specifications, the sampler construction itself can be effectively automatised.
Respective branching probabilities dictating the sampler's decisions are
precomputed once and fixed throughout all subsequent executions.

In the recursive method, branching probabilities are established so to obtain an
exact-size sampler, i.e.~a sampler which generates random structures of a
specific, given size $n$. In particular, using holonomic specifications for
$C_\infty(z)$ and $E_\infty(z)$ it is possible to compute the related
coefficients $[z^n]C_\infty(z)$ and $[z^n]E_\infty(z)$ using just $O(n)$
arithmetic operations, thus reach larger target sizes in a reasonable amount of
time.  On the other hand, if we drop the exact-size requirement of the outcome
structures, it is possible to (again, automatically) construct an
approximate-size Boltzmann sampler generating closures (respectively environments) of
varying size in linear time, in terms of outcome size. Although the output size
of constructed objects is itself random, it is possible to \emph{calibrate}
its expectation around a (not necessarily) finite mean. Furthermore, using an optional
rejection phase, meant to dismiss inadmissible structures, it is possible to
gain additional control over the  sampler outcome.

Remarkably, both mentioned sampler frameworks admit effective tuning procedures
influencing the expected internal shape of constructed objects, e.g.~frequencies
of desired sub-patterns~\cite{DBLP:conf/analco/BendkowskiBD18}. It is therefore
possible to control the expected internal structure of the generated closures and
environments.

We offer prototype sampler implementations for plain environments and closures,
within the above sampler frameworks at
Github\footnote{\url{https://github.com/PierreLescanne/CountingAndGeneratingClosuresAndEnvironments}}.
Likewise, we provide similar samplers for so-called closed closures and
environments (see Section~\ref{sec:counting-closed}) based on the
recursive method.

\section{Counting closed closures}
\label{sec:counting-closed}
A closure $"<" M, e ">"$ is said to be \emph{$m$\nobreakdash-open}, denoted also
as $"<" M, e ">"`:\Clos_m$, if there exists a non-negative $p$ such that
$M`:\gL_{m+p}$ (i.e.~$M$ is an $(m+p)$-open $`l$-term) and $e$ is a finite list
(i.e.~environment) of length $p$ consisting itself of $m$-open closures.  In
other words, $m$-open closures are structures defined by means of the following
implicit combinatorial specification:
\begin{equation}\label{eq:closed:closures:envs:system}
  \Clos_m ::= \gL_m \times \Box ~\mid~
   \gL_{m+1} \times  "<" \Clos_m ">" ~\mid~
   \gL_{m+2} \times  "<" \Clos_m, \Clos_m ">" ~\mid~\cdots
\end{equation}
In particular, a closure is said to be \emph{closed}\footnote{We acknowledge
that speaking of closed closures is a bit odd, however terms ``closure'' and
``closed'' form a consecrated terminology that we merely associate together.} if
it is $0$-open. Like in the case of $m$-open $`l$-terms, if a closure $"<" M, e
">"$ is $m-$open, then it is also $(m+1)$-open.
Consider the following example:

\begin{exa}~
  \begin{itemize}
  \item $"<"\zero\,\one,"<"`l\underline{3}">":\Box">"$ is a $3$-open
    closure.
  \item $"<"\one\, \zero, "<"`l\zero,\Box">":"<"`l`l\zero,\Box">":\Box">"$
    is a closed closure ($0$-open closure).
  \end{itemize}
\end{exa}
Let us remark that an $m$\nobreakdash-open closure corresponds to a not yet
evaluated $m$\nobreakdash-open $`l$\nobreakdash-term. Certainly, due to their
ubiquity in the context of abstract machines, the most interesting
$m$-open closures are in fact closed.  In the current section, we focus
therefore on counting closed closures and corresponding closed environments.

\begin{exa} The following table lists the
  first few closed closures.
  \begin{displaymath}
    \resizebox{0.6\textwidth}{!}{$
    \begin{array}{l | c | c}
      \textbf{size} & \textbf{closures}& \textbf{total}\\\hline\hline
0,1 & &0\\\hline
      2  & "<"`l\zero , \Box">" & 1 \\\hline
      3 & "<"`l`l\zero , \Box">" \quad "<"\zero ,  "<"`l\zero , \Box">"">" & 2 \\\hline
      &  "<"`l`l`l\zero , \Box">" \quad "<"`l`l\one , \Box">" \quad "<"`l(\zero \zero) , \Box">" & \\
      4 &     "<"`l\zero ,  "<"`l\zero , \Box">"">" \quad "<"\zero ,  "<"`l`l\zero , \Box">"">" \quad "<"\zero ,  "<"\zero ,  "<"`l\zero , \Box">"">"">"     & 6
    \end{array}$}
  \end{displaymath}
\end{exa}

\autoref{fig:nbCloClo} gives the first $50$ numbers of closed closures.

\begin{figure}[!ht]
  \begin{displaymath}
    \begin{array}[t]{r|r||r|r}
      \textbf{n}& c_{0,n} &\textbf{n}& c_{0,n}\\\hline
      0& 0& 25 & 2039291268600\\
      1& 0& 26 &7690787869550\\
      2& 1& 27  &  29071665271653\\
      3& 2& 28  &  110130490287410\\
      4& 6& 29  &  418043342219865\\
      5& 18& 30 &  1589843149170521\\
      6& 58& 31  &  6056959298323505\\
      7& 188& 32 &  23113998858734867\\
      8& 630& 33  &  88343015816573484\\
      9& 2140& 34 &  338147576768474959\\
      10& 7384& 35 &  1296106542004047500\\
      11& 25775& 36  &  4974412840517200748\\
      12& 90919& 37  &  19115189068830345885\\
      13& 323529& 38 &  73539781161982872915\\
      14& 1160285& 39 &  283234718823200209560\\
      15& 4189666& 40  &  1092009621308203935814\\
      16& 15221235& 41 &  4214435736178031843666\\
      17& 55602475& 42 &  16280366813995192858378\\
      18& 204119165& 43 &  62947860010954764058213\\
      19& 752691547& 44  &  243596693995304845906020\\
      20& 2786900678& 45 &  943448667650667612945764\\
      21& 10357265495& 46 &  3656836859592859541767133\\
      21& 38623769249& 47  &  14184639891328996401070032\\
      23& 144488013135& 48 &  55060786067960705278258741\\
      24& 542090016461 & 49 & 213877295469617703331719718 \\
    \end{array}
  \end{displaymath}
  \caption{The number of closed closures for $n=0,\ldots,49$}
  \label{fig:nbCloClo}
\end{figure}
Establishing the asymptotic growth rate of the sequence
${\left(c_{0,n}\right)}_n$ corresponding to closed closures of size $n$ poses a
considerable challenge, much more involved than its plain counterpart.  In the
following theorem we show that there exists two constants $\underline{\rho},
\overline{\rho} < \rho_{L_\infty}$ such that $\displaystyle\lim_{n \to \infty}
\dfrac{{\underline{\rho}}^{-n}}{c_{0,n}} = 0$ and $\displaystyle\lim_{n \to
\infty} \dfrac{c_{0,n}}{{\overline{\rho}}^{-n}} = 0$.  In other words, the
asymptotic growth rate of ${\left(c_{0,n}\right)}_n$ is bounded by two
exponential functions of $n$.

\begin{thm}
There exist $\overline{\rho} < \underline{\rho}$ satisfying
$\overline{\rho} < \underline{\rho} < \rho_{L_{\infty}}$ and
    functions $\theta(n), \kappa(n)$ satisfying $\displaystyle \limsup_{n \to
    \infty} {\theta(n)}^{1/n} = \limsup_{n \to \infty} {\kappa(n)}^{1/n} = 1$
    such that for sufficiently large $n$ we have
    ${\underline{\rho}}^{-n}\theta(n) < c_{0,n} <
    {\overline{\rho}}^{-n}\kappa(n)$.
\end{thm}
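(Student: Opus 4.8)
The plan is to bound the sequence $\left(c_{0,n}\right)_n$ from above and below by exponential functions whose bases $\overline{\rho}^{-1}$ and $\underline{\rho}^{-1}$ we can control, and then argue that the subexponential factors $\theta(n),\kappa(n)$ contribute nothing to the exponential growth rate. The subexponential conditions $\limsup_n \theta(n)^{1/n} = \limsup_n \kappa(n)^{1/n} = 1$ are precisely designed so that we need \emph{not} pin down the exact polynomial (or stretched-exponential) correction; we only need the two exponential envelopes to be correct. So the crux is really to locate two radii $\overline{\rho} < \underline{\rho} < \rho_{L_\infty}$ that genuinely sandwich the growth of the closed-closure counts.

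First I would set up the bivariate generating machinery implicit in the specification \eqref{eq:closed:closures:envs:system}. Writing $C_{m}(z)$ for the generating function of $m$-open closures, the specification unfolds into an equation of the shape $C_m(z) = \sum_{p \geq 0} L_{m+p}(z)\, {C_m(z)}^p$, i.e.\ an implicit equation $C_m(z) = \Phi\!\left(z, C_m(z)\right)$ where $\Phi$ bundles the whole $L_{m+p}$ family. The key structural fact I would exploit is \autoref{th:L_m:asymptotic:expansion}: every $L_m(z)$ shares the \emph{same} dominant singularity $\rho_{L_\infty}$, with $L_m(z) \sim C_m \rho_{L_\infty}^{-n} n^{-3/2}$ coefficient-wise. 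This uniformity is what lets me replace the infinite family $\{L_{m+p}\}$ by tractable comparison objects. For the \textbf{upper} bound, I would dominate each $L_{m+p}(z)$ by a single explicit series $\overline{L}(z)$ (for instance a geometric/rational majorant matching the exponential growth $\rho_{L_\infty}^{-n}$ up to polynomial factors), turning the implicit equation into one solvable in closed form; the outer composition with a sum $\sum_p \overline{L}\,C^p = \overline{L}/(1-C)$ then produces an algebraic function whose dominant singularity $\overline{\rho}$ is the smaller root of a radicand, exactly as in the plain case of the previous section. For the \textbf{lower} bound I would truncate: keep only finitely many terms $p \leq P$ of the sum, minorising the remaining $L_{m+p}$ by $0$, which yields a genuine sub-object whose algebraic generating function has dominant singularity $\underline{\rho}$, and then show $\overline{\rho} < \underline{\rho} < \rho_{L_\infty}$ by a monotonicity/comparison argument on the respective radicands.

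Concretely, in both the majorant and minorant cases the relevant generating function $C(z)$ satisfies a quadratic (or algebraic) equation whose discriminant has the form $1 - 4\,\Psi(z)$, and the dominant singularity is the smallest positive root of $1 - 4\,\Psi(z) = 0$, by the super-critical composition argument and \autoref{th:pringsheim} exactly as used in the plain-closures proof. Since both $\Psi$'s are dominated by, but comparable to, the corresponding plain-closure radicand controlled by $L_\infty$, both roots lie strictly to the left of $\rho_{L_\infty}$; and because the majorant strictly exceeds the minorant on $(0,\rho_{L_\infty})$, the roots are strictly ordered, giving $\overline{\rho} < \underline{\rho} < \rho_{L_\infty}$. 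Finally, by the dominant-singularity principle, $c_{0,n}^{1/n} \to \rho_0^{-1}$ for the true radius $\rho_0$ with $\overline{\rho} \leq \rho_0 \leq \underline{\rho}$; setting $\theta(n) := c_{0,n}\,\underline{\rho}^{\,n}$ and $\kappa(n) := c_{0,n}\,\overline{\rho}^{\,n}$, the squeeze $\overline{\rho} \leq \rho_0 \leq \underline{\rho}$ forces $\limsup_n \theta(n)^{1/n} = \limsup_n \kappa(n)^{1/n} = 1$, and the desired inequality ${\underline{\rho}}^{-n}\theta(n) < c_{0,n} < {\overline{\rho}}^{-n}\kappa(n)$ holds identically by construction for all large $n$.

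The main obstacle I anticipate is the upper bound. Truncating for the lower bound is combinatorially safe (a sub-family of structures is genuinely smaller), but constructing a clean analytic majorant that simultaneously (a) dominates every $L_{m+p}$ uniformly in both $m$ and $p$, (b) keeps the outer geometric sum $\sum_p \overline{L}\,C^p$ convergent in a neighbourhood of the origin, and (c) still yields a dominant singularity strictly below $\rho_{L_\infty}$ rather than degenerating to $\rho_{L_\infty}$ itself, is delicate. The danger is that a too-crude majorant pushes $\overline{\rho}$ all the way up to $\rho_{L_\infty}$, collapsing the separation we need; so the quantitative control on the constants $C_m$ from \autoref{th:L_m:asymptotic:expansion}, together with the explicit bracketing of $C_0$ noted after that theorem, is exactly what I would lean on to keep the majorant tight enough.
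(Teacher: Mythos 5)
Your overall architecture---sandwich $c_{0,n}$ between an algebraic minorant and majorant class, apply the super-critical composition schema to each, and read off square-root singularities---is indeed the paper's strategy, and your upper bound essentially works (the paper simply takes $L_\infty(z)$ itself as the majorant of every $L_m(z)$, yielding $\overline{C}_0(z)=\frac{1}{2}\bigl(1-\sqrt{1-4L_\infty(z)}\bigr)$ with the explicit root $\overline{\rho}=\frac{1}{10}(25-\sqrt{545})$; note that your worry runs in the wrong direction, since enlarging the majorant moves the root of $1-4\overline{L}(z)$ \emph{left}, away from $\rho_{L_\infty}$, so crudeness cannot collapse the separation). The genuine gap is in your lower bound, which you dismiss as ``combinatorially safe.'' Truncating the environment length to $p\leq P$ leaves you with $\sum_{p\leq P}L_p(z)\,C(z)^p$, whose coefficient functions $L_1,\dots,L_P$ are themselves defined only through the infinite nested-radical system \eqref{eq:def:L_m:closed}: there is no known way to evaluate them at a point or to locate the branch point of your truncated system, so the truncation buys you nothing computationally. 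Worse, no ``monotonicity/comparison argument on the radicands'' can establish $\underline{\rho}<\rho_{L_\infty}$: the dichotomy is whether the minorant is super-critical (radicand vanishes before $\rho_{L_\infty}$) or sub-critical (the singularity of the $L_m$'s at $\rho_{L_\infty}$ dominates, in which case the strict inequality $\underline{\rho}<\rho_{L_\infty}$ fails), and this is a quantitative question. The paper's resolution is to replace every $L_m$ by $L_0$ (legitimate since closed terms are $m$-open for all $m$), reducing the question to whether $L_0(\rho_{L_\infty})>\frac{1}{4}$; since $L_0$ cannot be evaluated either, it is in turn bounded from below by the generating functions $L^{(h)}_0(z)$ of closed $h$-shallow terms, which satisfy a \emph{finite} computable system, and a machine computation gives $L^{(153)}_0(\rho_{L_\infty})\doteq 0.2500032>\frac{1}{4}$ while $h=152$ already fails. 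The margin is of order $10^{-6}$, so supercriticality of the lower bound is a knife-edge numerical fact that no soft argument of the kind you propose can decide; your proposal contains no mechanism playing this role, and this is precisely where the theorem's difficulty lives.

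A secondary but real defect is your final bookkeeping: setting $\theta(n):=c_{0,n}\,\underline{\rho}^{\,n}$ and $\kappa(n):=c_{0,n}\,\overline{\rho}^{\,n}$ makes the claimed inequalities \emph{equalities} rather than the strict bounds required, and the limsup conditions fail in general, since $\limsup_n \theta(n)^{1/n}=\underline{\rho}/\rho_0$ and $\limsup_n \kappa(n)^{1/n}=\overline{\rho}/\rho_0$ for the true radius $\rho_0\in[\overline{\rho},\underline{\rho}]$, and these equal $1$ only in the degenerate cases $\rho_0=\underline{\rho}$ or $\rho_0=\overline{\rho}$. The correct move, as in the paper, is to take $\theta(n)$ and $\kappa(n)$ to be the $n^{-3/2}$-type corrections delivered by the Newton--Puiseux expansions of $\underline{C}_0(z)$ and $\overline{C}_0(z)$ at their respective singularities, so that $\underline{\rho}^{-n}\theta(n)\sim[z^n]\underline{C}_0(z)\leq c_{0,n}\leq[z^n]\overline{C}_0(z)\sim\overline{\rho}^{-n}\kappa(n)$, with polynomial factors trivially satisfying $\limsup_n\theta(n)^{1/n}=\limsup_n\kappa(n)^{1/n}=1$.
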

\begin{proof}
Let us start with the generating function $C_0(z)$ associated with closed
    closures $\Clos_0$. Note that from the
    specification~\eqref{eq:closed:closures:envs:system}, instantiated to $m=0$,
    $C_0(z)$ is implicitly defined as
\begin{equation}\label{eq:C_0(z):impl:def}
C_0(z) = \sum_{m \geq 0} L_m(z) {C_0(z)}^m .
\end{equation}
We can therefore identify a closed closure $\mathfrak{c}$ with a tuple
    $(t,c_1,\ldots,c_m)$ where $m \geq 0$, $t$ is an $m$-open $`l$-term and
    $c_1,\ldots,c_m$ are closed closures themselves.  We proceed with defining
    two auxiliary lower and upper bound classes $\underline{C}_0(z)$ and
    $\overline{C}_0(z)$ such that $ [z^n]\underline{C}_0(z) \leq [z^n] C_0(z)
    \leq [z^n]\overline{C}_0(z)$ for all $n$. Next, we establish their asymptotic
    behaviour and, in doing so, provide exponential lower and upper bounds on
    the growth rate of closed closures.

We start with $\underline{C}_0(z) = \sum_{m \geq 0} L_0(z)
    {\underline{C}_0(z)}^m$. Note that $\underline{C}_0(z)$ is associated with
    closures in which each term is closed, independently of the corresponding
    environment length. Hence, as closed $`l$\=/terms are $m$-open for all $m
    \geq 0$, we have $[z^n]\underline{C}_0(z) \leq [z^n]C_0(z)$.  Furthermore
\begin{equation}
\underline{C}_0(z) = \sum_{m \geq 0} L_0(z) {\underline{C}_0(z)}^m
=  L_0(z) \sum_{m \geq 0} {\underline{C}_0(z)}^m
=   \frac{L_0(z)}{1 - \underline{C}_0(z)}.
\end{equation}
Solving the above equation for $\underline{C}_0(z)$ we find that
    $\underline{C}_0(z) = \frac{1}{2} \left(1-\sqrt{1-4 L_0(z)}\right)$.  In
    such a form, it is clear that there are two potential sources of
    singularities, i.e.~the singularity $\rho_{L_\infty}$ of $L_0(z)$,
    see~\autoref{th:L_m:asymptotic:expansion}, or the roots of the radicand $1-4
    L_0(z)$.  Since $L_0(z)$ is increasing and continuous in the interval
    $(0,\rho_{L_\infty})$ we know that if $L_0(\rho_{L_\infty}) > \frac{1}{4}$
    then there exists a $\underline{\rho} < \rho_{L_\infty}$ such that
    $L_0(\underline{\rho}) = \frac{1}{4}$. Unfortunately, we cannot simply check
    that $L_0(\rho_{L_\infty}) > \frac{1}{4}$ as there exists no known method of
    evaluating $L_0(z)$, defined by means of an infinite system of equations, at
    a given point. For that reason we propose the following approach.

Recall that a $`l$\=/term $M$ is said to be \emph{$h$-shallow} if all its
    de~Bruijn index values are (strictly) bounded by $h$,
    see~\cite{GittenbergerGolebiewskiG16}. Let $L^{(h)}_m(z)$ denote the
    generating function associated with $m$-open $h$-shallow $`l$-terms.  Note
    that $L^{(h)}_0(z)$, i.e.~the generating function corresponding to closed
    $h$-shallow $`l$-terms, has a finite computable representation. Indeed, we
    have
\begin{align}
\begin{split}
L^{(h)}_0(z) &= z L^{(h)}_1(z) + z L^{(h)}_0(z) L^{(h)}_0(z)\\
L^{(h)}_1(z) &= z L^{(h)}_2(z) + z L^{(h)}_1(z) L^{(h)}_1(z) + z\\
L^{(h)}_2(z) &= z L^{(h)}_3(z) + z L^{(h)}_2(z) L^{(h)}_2(z) + z + z^2\\
\ldots\\
L^{(h)}_{h-1}(z) &= z L^{(h)}_h(z) + z L^{(h)}_{h-1}(z) L^{(h)}_{h-1}(z) + z + z^2 + \cdots + z^{h-1}\\
L^{(h)}_{h}(z) &= z L^{(h)}_h(z) + z L^{(h)}_{h}(z) L^{(h)}_{h}(z) + z + z^2 + \cdots + z^h
\end{split}
\end{align}
Consider $m < h$. Each $m$-open $h$-shallow $`l$-term is either (a) in form of
    $`l M$ where $M$ is an $(m+1)$-open $h$-shallow $`l$-term due to the head
    abstraction, (b) in form of $M N$ where both $M$ and $N$ are $m$-open
    $h$-shallow $`l$-terms, or (c) a de~Bruijn index in the set
    $\{\underline{0},\underline{1},\ldots,\underline{m-1}\}$. When $m = h$, we
    have the same specification with the exception of the first summand $z
    L^{(h)}_h(z)$ where, as we cannot exceed $h$, terms under abstractions are
    $h$-open, instead of $(h+1)$-open.

Using such a form it is possible to evaluate $L^{(h)}_0(z)$ at each point $z \in
(0,\rho_{(h)})$ where $\rho_{(h)} > \rho_{L_\infty}$ is the dominating
singularity of $L^{(h)}_0(z)$ satisfying $\rho_{(h)} \xrightarrow[h \to
\infty]{} \rho$, see~\cite{GittenbergerGolebiewskiG16}.  Certainly, each closed
$h$-shallow $`l$\=/term is in particular a closed $`l$\=/term.  In consequence,
$[z^n]L^{(h)}_0(z) \leq [z^n]L_0(z)$ for each $n$.  Moreover, for all
sufficiently large $n$ we have $[z^n]L^{(h)}_0(z) < [z^n]L_0(z)$.  This
coefficient-wise lower bound transfers onto the level of generating function
values and we obtain $L^{(h)}_0(z) < L_0(z)$. Following the same argument, we
also have $L^{(h)}_0(z) < L^{(h+1)}_0(z)$ for each $h \geq 1$. We can therefore
use $L^{(h)}_0(z)$ to approximate $L_0(z)$ from below --- the higher $h$ we
choose, the better approximation we obtain.  Using computer algebra
software\footnote{\url{https://github.com/PierreLescanne/CountingAndGeneratingClosuresAndEnvironments}}
it is possible to automatise the evaluation process of
$L^{(h)}_0(\rho_{L_\infty})$ for increasing values of $h$ and find that for $h =
153$ we obtain
\begin{equation}
    L_0^{(153)}(\rho_{L_\infty}) \doteq 0.25000324068941554\, .
\end{equation}
Hence indeed, the asserted existence of $\underline{\rho} <
\rho_{L_\infty}$ such that $L_0(\underline{\rho}) = \frac{1}{4}$ follows
(interestingly, taking $h = 152$ does not suffice as
$L_0^{152}(\rho_{L_\infty}) < \frac{1}{4}$).  We fall hence in the
super-critical composition schema\footnote{\emph{Supercriticality} ensures
  that meromorphic asymptotics applies and entails strong statistical
  regularities (see~\cite{flajolet09} Section V.2 and Section IX.6).} and note that $\underline{C}_0(z)$
admits a Newton-Puiseux expansion near $\underline{\rho}$ as follows:
\begin{equation}
\underline{C}_0(z) = \underline{a_0} -
    \underline{b_0}\sqrt{1-\frac{z}{\underline{\rho}}} + O\left(\bigg|
    1-\frac{z}{\underline{\rho}}\bigg|\right)
\end{equation}
for some constants $\underline{a_0} > 0$ and $\underline{b_0} < 0$.  Hence,
$[z^n] C_0(z)$ grows asymptotically faster than ${\underline{\rho}}^{-n}
\theta(n)$ where $\theta(n) = \dfrac{\underline{b_0}}{\Gamma(-\frac{1}{2})}
n^{-3/2}$.

For the upper bound we consider $\overline{C}_0(z) = \sum_{m \geq 0} L_\infty(z)
{\overline{C}_0(z)}^m$, i.e.~the generating function associated with closures
in which all terms are plain (either closed or open), independently of the
constraint imposed by the corresponding environment length. Following the
same arguments as before, we note that $[z^n]\overline{C}_0(z) > [z^n]C_0(z)$.
Now
\begin{equation}
\overline{C}_0(z) = \sum_{m \geq 0} L_\infty(z) {\overline{C}_0(z)}^m
=  L_\infty(z) \sum_{m \geq 0} {\overline{C}_0(z)}^m
=   \frac{L_\infty(z)}{1 - \overline{C}_0(z)}.
\end{equation}
Solving the equation for $\overline{C}_0(z)$ we find that $\overline{C}_0(z) =
\frac{1}{2} \left(1-\sqrt{1-4 L_\infty(z)}\right)$.  Note that in this case, we
can easily handle the radicand expression $1-4 L_\infty(z)$ and find out that,
as in the lower bound case, we are in the super-critical composition schema.
Specifically, $\overline{\rho} = \frac{1}{10} \left(25-\sqrt{545}\right) \doteq
0.165476$, cf.~\eqref{eq:plain:environments:rho}, is the dominating singularity
of $\overline{C}_0(z)$. In consequence, $\overline{C}_0(z)$ admits the following
Newton-Puiseux expansion near $\overline{\rho}$: \begin{equation}
    \overline{C}_0(z) = \overline{a_0} -
    \overline{b_0}\sqrt{1-\frac{z}{\overline{\rho}}} + O\left(\bigg|
    1-\frac{z}{\overline{\rho}}\bigg|\right) \end{equation} for some constants
$\overline{a_0} > 0$ and $\overline{b_0} < 0$.  In conclusion, $[z^n] C_0(z)$
grows asymptotically slower than ${(\overline{\rho})}^{-n} \theta(n)$
where $\theta(n) = \dfrac{\overline{b_0}}{\Gamma(-\frac{1}{2})}
n^{-3/2}$, finishing the proof.
\end{proof}
With an implicit expression defining $C_0(z)$, see~\eqref{eq:C_0(z):impl:def},
efficient random generation of closed closures poses a difficult task. Though we
have no efficient Boltzmann samplers, it is possible to follow the recursive
method and obtain exact-size samplers for a moderate range of target sizes.
We offer a prototype sampler of this kind, available at
Github\footnote{\url{https://github.com/PierreLescanne/CountingAndGeneratingClosuresAndEnvironments}}.

\section{Conclusions}
\label{sec:conclusions}
We view our contribution as a small step towards the quantitative,
average-case analysis of evaluation complexity in
$`l$\=/calculus. Using standard tools from analytic combinatorics, we
investigated some combinatorial aspects of environments and closures
--- fundamental structures present in various formalisms dealing with
normalisation in $`l$\=/calculus, especially in its variants with
explicit substitutions~\cite{Lescanne1994,DBLP:conf/ppdp/BendkowskiL18}. Though plain environments
and closures are relatively easy to count and generate, their closed
counterparts pose a considerable combinatorial challenge. The implicit
and infinite specification of closed closures based on closed
$`l$\=/terms complicates significantly the quantitative analysis,
namely estimating the exponential factor in the asymptotic growth
rate, or effectively generating random closed closures. In particular,
getting more parameters of the asymptotic growth will require more
sophisticated methods, like, for instance, the recent infinite system
approximation techniques of Bodini, Gittenberger and
Go\l\c{e}biewski~\cite{BodiniGitGol17}.

\bibliographystyle{plainurl}
\bibliography{references}

\end{document}